\definecolor{darkgreen}{rgb}{0.0,0.7,0.0}
\newenvironment{bs}{\noindent\color{darkgreen} BS:}{}
\newenvironment{mk}{\noindent\color{blue} MK:} {}
\newenvironment{vd}{\noindent\color{red} VD:} {}
\newcommand{\refthm}[1]{Theorem~\ref{#1}\xspace}
\newcommand{\reflem}[1]{Lemma~\ref{#1}\xspace}
\newcommand{\refprop}[1]{Proposition~\ref{#1}\xspace}
\newcommand{\refcon}[1]{Conjecture~\ref{#1}\xspace}
\newcommand{\refsec}[1]{Section~\ref{#1}\xspace}
\newcommand{\refex}[1]{Example~\ref{#1}\xspace}
\newcommand{\IFF}{if and only if\xspace}
\newcommand{\homo}{homomorphism\xspace}
\newcommand{\homos}{homomorphisms\xspace}
\newcommand{\schuetz}{Sch\"utz\-en\-ber\-ger\xspace}
\newcommand{\EF}{Ehren\-feucht-Fra{\"{\i}}ss{\'e}\xspace}
\newcommand{\CR}{Church-Rosser\xspace}
\newcommand{\swr}{sub\-word-redu\-cing\xspace}
\newcommand{\lr}{length-redu\-cing\xspace}
\newcommand{\set}[2]{\left\{#1\mathrel{\left|\vphantom{#1}\vphantom{#2}\right.}#2\right\}}
\newcommand{\oneset}[1]{\left\{\mathinner{#1}\right\}}
\newcommand{\smallset}[1]{\left\{#1\right\}}
\let\implies=\undefined
\newcommand{\implies}   {\text{$\;\Rightarrow\;$}}
\newcommand{\abs}[1]{\left|\mathinner{#1}\right|}
\newcommand{\N}{\mathbb{N}}
\newcommand{\logicfont}[1]{\mathrm{#1}}
\newcommand{\FO}{\logicfont{FO}}
\newcommand{\LTL}{\logicfont{LTL}}  
\newcommand{\wh}[1]{\widehat{#1}\,}
\newcommand{\sse}{\subseteq} 
\newcommand{\sm}{\setminus}
\newcommand{\IRR}{\mathrm{IRR}}
\renewcommand{\phi}{\varphi}
\newcommand{\eps}{\varepsilon}
\newcommand{\alp}{\alpha}
\newcommand{\gam}{\gamma}
\newcommand{\sig}{\sigma}
\newcommand{\Gam}{C}
\newcommand\RAS[2]{\overset{#1}{\underset{#2}{\Longrightarrow}}}
\newcommand\ra[1]{\overset{#1}{\longrightarrow}}
\newcommand\LAS[2]{\overset{#1}{\underset{#2}{\Longleftarrow}}}
\newcommand\DAS[2]{\overset{#1}{\underset{#2}{\Longleftrightarrow}}}
\newcommand\RA[1]{\underset{#1}{\Longrightarrow}}
\newcommand{\SF}{\mathrm{SF}}
\newcommand{\AP}{\mathrm{AP}}
\newtheorem{theorem}{Theorem}
\newtheorem{definition}[theorem]{Definition}
\newtheorem{proposition}[theorem]{Proposition}
\newtheorem{lemma}[theorem]{Lemma}
\newtheorem{remark}[theorem]{Remark}
\newtheorem{conjecture}{Conjecture}
\newtheorem{expl}[theorem]{Example}
\newenvironment{example}[1][]{\ifthenelse{\equal{#1}{}}{\begin{expl}\upshape}{\begin{expl}[#1]\upshape}}{\end{expl}}
\newcommand{\eex}{\hspace*{\fill}\ensuremath{\Diamond}}
\setlist{itemsep=3pt,parsep=3pt,topsep=4pt}
\newcommand{\dotcup}{\mathbin{\dot{\cup}}}
\begin{document}

\title{Star-Free Languages are Church-Rosser Congruential}
\author{%
  Volker Diekert\thanks{%
    Institut f\"ur Formale Methoden der Informatik, %
    University of Stuttgart. 
%
    The work on this paper has been initiated by the program
    \emph{Automata Theory and Applications} at the Institute for
    Mathematical Sciences, National University of Singapore in
    September~2011.  The first author would like to thank NUS for the
    hospitality and the organizing committee chaired by Frank Stephan
    for the invitation. He also thanks Klaus Reinhardt for the
    introduction to the topic.}
  \and Manfred Kuf\-leitner\thanks{%
    Institut f\"ur Formale Methoden der Informatik, %
    University of Stuttgart, Germany.
    The second author was supported by the German Research Foundation
    (DFG) under grant \mbox{DI 435/5-1}.}
  \and Pascal Weil\thanks{%
    LaBRI, Universit{\'e} de Bordeaux and CNRS, France. 
    The third author was supported by
    the grant ANR 2010 BLAN 0202 01 FREC.
  }}

\date{November 17, 2011}
\maketitle

\begin{abstract}
  \noindent
  \textbf{Abstract.} 
  The class of Church-Rosser congruential languages has been
  introduced by \mbox{McNaughton}, Narendran, and Otto in 1988.  A
  language $L$ is Church-Rosser congruential (belongs to CRCL), if
  there is a finite, confluent, and length-reducing semi-Thue system
  $S$ such that $L$ is a finite union of congruence classes modulo
  $S$. To date, it is still open whether every regular language is in
  CRCL.  In this paper, we show that every star-free language is in
  CRCL. In fact, we prove a stronger statement: For every star-free
  language $L$ there exists a finite, confluent, and subword-reducing
  semi-Thue system $S$ such that the total number of congruence
  classes modulo $S$ is finite and such that $L$ is a union of
  congruence classes modulo $S$.  The construction turns out to be
  effective.

  \medskip

  \noindent
  \textbf{Keywords.} \
  String rewriting; Church-Rosser system; star-free language;
  aperiodic monoid; local divisor.
\end{abstract}

\section{Introduction}\label{intro}

Church-Rosser congruential languages (CRCL) are a nonterminal-free
form of \CR languages (CRL).  Both classes have been defined
in~\cite{McNaughtonNO88}, and it was shown there that CRCL forms a
proper subclass in CRL. Languages in CRL enjoy various nice
properties. For example their word problem is decidable in linear
time.  A detailed discussion with links to further references can be
found in the PhD-thesis of Niemann~\cite{NiemannPhD02}, see also
\cite{NiemannO05}.  We content ourselves to define CRCL: A language $L
\in A^*$ is called a \emph{Church-Rosser congruential language}, if
there is a finite, \lr, and confluent semi-Thue system $S \sse A^*
\times A^*$ such that $L$ is a finite union of congruence classes
modulo $S$. This means that $L$ contains a finite set $F$ of shortest
words such that we have $w \in L$ \IFF every rewriting procedure
starting on $w$ and using $S$ terminates in one of the finitely many
words in $F$.

It was also shown in~\cite{McNaughtonNO88} that all deterministic
context-free languages are Church-Rosser.  However, surprisingly
it is not known whether all regular languages are CRCL.  The general
conjecture is ``yes'', but so far only partial results have been
established as in~\cite{NiemannW02}. The most advanced result has been
announced by Reinhardt and Th\'erien~\cite{reinhardtT03}: According to
their manuscript, if a regular language has a group as its syntactic
monoid, then this language is in CRCL.

In this note we consider the complementary class of group-free regular
languages; and we show that they belong to CRCL. A regular language is
\emph{group-free} if its syntactic monoid is group-free. This means it
is {\em aperiodic}. There are many other characterizations for this
class.  A fundamental result of \schuetz says that the class of
aperiodic language $\AP(A)$ is exactly the same as the class of
star-free languages $\SF(A)$~\cite{sch65sf:short}. It is the class
where the Krohn-Rhodes decomposition leads to a wreath product of the
three-element commutative idempotent reset-monoid
$U_2$~\cite{kr65tams}. It is also the class $\FO(A,<) $ of languages
definable in first-order logic~\cite{mp71:short}; and this is the same
as the class $\LTL(A)$ of languages definable in the linear temporal
logic~\cite{kam68}.

A proof that $\FO(A,<) = \SF(A) = \AP(A) = \LTL(A)$ can be
conveniently arranged in a cycle. The inclusion $\FO(A,<) \sse
\SF(A)$ can be explained very nicely with \EF-games
\cite{Ehrenfeucht61}. The inclusion $\SF(A) \sse \AP(A)$ follows
\schuetz's original idea. The inclusion $\AP(A) \sse \LTL(A)$ is
done in the survey~\cite{dg08SIWT:short} with the concept of
\emph{local divisors }which play a prominent role here, too.  The
final inclusion $\LTL(A) \sse \FO(A,<)$ is trivial.

Coming back to the class of Church-Rosser congruential languages, our
main result shows $\SF(A) \sse $ CRCL. Actually, we prove a much
stronger result.  First we define \swr semi-Thue systems which are a
proper subclass of finite \lr semi-Thue systems. For every language $L
\in \AP(A)$ we effectively construct a finite \swr confluent semi-Thue
system $ S \sse A^* \times A^*$ such that the total number of
congruence classes modulo $S$ is finite and $L$ is a union of such
classes, see \refthm{thm:main}. A main tool in our proof is the notion
of \emph{local divisor}, see \refsec{sec:ld} for a definition.

\enlargethispage{\baselineskip}

In the final section of this paper, \refsec{sec:alg}, we explain our
constructions in a rigorously algebraic framework. This part is
mainly intended for possible future work.

In order to give a complete positive solution to the conjecture that
all regular languages are CRCL, it remains to combine our approach
with the one in~\cite{reinhardtT03}. There are however quite a number
of obstacles for a fruitful combination. So, we leave the general
conjecture as a challenging research problem.

\section{Preliminaries and Notation}\label{pre}

In the following $A$ means a finite alphabet, an element of $A$ is
called a {\em letter}, and $A^*$ denotes the {\em free monoid}
generated by $A$. It is the set of {\em words} over $A$. The empty
word is denoted by 1. The {\em length} of a word $u$ is denoted by
$\abs u$. We have $\abs u = n$ for $u= a_1 \cdots a_n$ where $a_i \in
A$. The empty word has length $0$.  We carefully distinguish between
the notion of factor and subword.  Let $u,v \in A^*$. The word $u$ is
called a {\em factor} of $v$ if there is a factorization $v= xuy$. It
is called a {\em subword} of $v$ if there is a factorization $v=
x_0u_1x_1 \cdots u_kx_k$ such that $u= u_1 \cdots u_k$.  A subword is
also sometimes called a {\em scattered subword} in the literature.

A {\em semi-Thue system} over $A$ is a subset $S\sse A^*\times
A^*$. The elements are called {\em rules}. We frequently write
$\ell \ra{} r$ for rules $(\ell,r)$.  A system $S$ is called
{\em \lr} if we have $\abs \ell > \abs r $ for all rules $(\ell,r)
\in S$. It is called {\em \swr,} if $r$ is a subword of $\ell$ and
$\ell \neq r$ for all rules $(\ell,r) \in S$. Every \swr system is
\lr, but not vice versa.

Every system $S$ defines the rewriting relation ${\RA S} \sse A^*
\times A^*$ by
\begin{align*}
  u \RA S v \;\text{ if } \; u=p\ell q, \; v= prq \; \text{ for some
    rule } \; (\ell,r) \in S.
\end{align*}

By $\RAS*{S}$ we mean the reflexive and transitive closure of
$\RA{S}$.  By $\DAS*{S}$ we mean the symmetric, reflexive, and
transitive closure of $\RA{S}$. We also write $u \LAS*{S}v$ whenever
$v\RAS*{S}u$.  The system $S$ is {\em confluent} if for all
$u\DAS*{S}v$ there is some $w$ such that $u\RAS*{S}w\LAS*{S}v$.

Note that $u \RAS{}S v$ implies that $\abs u > \abs v$ for \lr
systems. For \swr systems it implies that the set of subwords in $v$
is a proper subset of the set of subwords in $u$.

By $\IRR(S)$ we denote the set of irreducible words, i.e., the set of
words where no left-hand side occurs as any factor.  The relation
${\DAS * S} \sse A^* \times A^*$ is a congruence, hence the congruence
classes $[u]_S = \{v \in A^*\mid u \DAS*S v\}$ form a monoid which is
denoted by $A^*/S $. A finite semi-Thue system $S$ can be viewed as a
finite set of defining relations. Hence, $A^*/S $ becomes a finitely
presented monoid.

\begin{definition} 
  A semi-Thue system $S$ is called a {\em Church-Rosser system} if it
  is \lr and confluent.  A language $L \sse A^*$ is called a {\em
    Church-Rosser congruential language} if there is a finite
  Church-Rosser system $S$ such that $L$ can be written as a finite
  union of congruence classes $[u]_S$.
\end{definition}

\begin{remark}\label{rem:basic}
  A semi-Thue system $S$ is a Church-Rosser system \IFF (1) it is
  length-reducing and (2) every congruence class has exactly one
  irreducible element.

  Let $\pi: A^* \to A^*/S, \; u \mapsto [u]_S$ be the canonical \homo
  and $S$ be a finite Church-Rosser system. Then $\pi^{-1}(K)$ is a
  Church-Rosser congruential language as soon as $K$ is finite.
\end{remark}
 
 
\begin{conjecture}\label{con:mcnno}
  Every regular language is a Church-Rosser congruential language.
\end{conjecture}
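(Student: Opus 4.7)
The statement is \refcon{con:mcnno}, which the authors themselves flag as open; consequently my proposal is necessarily a program rather than a proof. The natural approach, sketched at the end of Section~\ref{intro}, is to combine \refthm{thm:main} (the aperiodic/star-free case, proved in this paper) with the announced result of Reinhardt and Th\'erien on languages whose syntactic monoid is a group. Via the Krohn--Rhodes decomposition, every finite monoid divides an iterated wreath product built from finite simple groups and copies of the aperiodic reset monoid $U_2$. The two extreme building blocks being settled, the remaining challenge is to show that CRCL is closed under the operation of adding one such layer to the syntactic monoid.

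Concretely, I would imitate the local-divisor induction that drives this paper. Given $L \sse A^*$ recognized by a \homo $\pi\colon A^* \to M$, one picks a letter $c \in A$ and forms the local divisor $M_c$, then splits $L$ into the $c$-free part (which lives over a strictly smaller alphabet and so is handled inductively) and the part made of words containing $c$, which factor through a shorter alphabet mapping into $M_c$. Inductively one would obtain finite confluent length-reducing systems over the simpler data and then pull them back by substitution to produce rules over $A^*$. For the induction to work in the mixed setting, the complexity parameter must combine $\abs{A}$ with Krohn--Rhodes complexity, so that $M_c$ is strictly smaller whether one is peeling off a group layer or an aperiodic layer.

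The principal obstacle is that subword-reducing rewriting---the engine of \refthm{thm:main}---is fundamentally incompatible with groups: within a group, no nontrivial element is congruent to a proper subword of any of its representatives, so one is forced back to merely length-reducing rules. This breaks the clean termination argument used here and greatly complicates confluence, because critical pairs arising from the interaction of Reinhardt--Th\'erien-style rules (encoding group computations) with rules of the style developed in this paper (encoding aperiodic reductions) must all be resolved. Designing a unified rewriting strategy, together with a single complexity measure on which both kinds of induction can proceed simultaneously, is the step I would expect to be the principal hurdle, and presumably the reason the authors leave the full conjecture open.
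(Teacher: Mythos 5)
Your submission is not a proof, and correctly so: the statement is \refcon{con:mcnno}, which the paper explicitly leaves open, so there is no proof of it in the paper to compare against. Your assessment matches the authors' own --- the program you outline (combining \refthm{thm:main} with the Reinhardt--Th\'erien group case) is exactly the one they sketch at the end of \refsec{intro}, and the obstruction you identify is the one they state themselves: their induction hypothesis requires \emph{subword-reducing} systems, which cannot accommodate the merely length-reducing systems needed for groups, so the two inductions do not compose. In short, your proposal is an accurate account of why the conjecture remains open rather than a resolution of it.
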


\pagebreak[1]

\begin{example}\label{exa:1}
  Consider the language $L = (bc)^+$. A Church-Rosser system for $L$
  is given by the one-rule semi-Thue system $S = \oneset{cbc
    \longrightarrow c}$. The monoid $
  \smallset{b,c}^*/S$ is infinite. However $L = [bc]_S$; and hence $u
  \in L$ if and only if $u \RAS*{S} bc$.
  \eex
\end{example}

A manuscript of Reinhardt and Th\'erien~\cite{reinhardtT03} says that
\refcon{con:mcnno} is true in case that the syntactic monoid of the
regular language is a group.  Here, we are going to prove an even
stronger result for aperiodic languages, i.e., for languages where the
syntactic monoid is group-free.  As our proof uses subword-reducing
systems in the induction hypothesis, we cannot incorporate the
statement of Reinhardt and Th\'erien (using length-reducing rather
than subword-reducing systems) as base in our induction scheme.  So
the status of \refcon{con:mcnno} remains open, in general.

\begin{definition}\label{def:cr}
  Let $\phi: A^* \to M$ be a \homo to a finite monoid $M$.  We say
  $\phi$ {\em factorizes through a finite Church-Rosser monoid
    $A^*/ S$} if there is a finite Church-Rosser system $S$ such
  that $A^*/ S$ is a finite monoid and $[u]_S \sse
  \phi^{-1}(\phi(u))$ for all $u \in A^*$.
\end{definition}

A classical fact states that a language $L \sse A^*$ is regular
\IFF it is {\em recognizable}, i.e., there is a \homo $\phi: A^*
\to M$ to a finite monoid $M$ such that $L = \phi^{-1}(\phi(L))$. We
also say that $\phi$ (or that $M$) recognizes $L$.  Recall that a
finite monoid $M$ is called {\em aperiodic} if there exists some $n\in
\N$ such that $x^n = x^{n+1}$ for all $x \in M$. Accordingly, a
language $L \sse A^*$ is called {\em aperiodic} if it is recognized
by some finite aperiodic monoid $M$.

Note that if $\phi$ factorizes through a finite Church-Rosser monoid,
then we have $$\phi: A^* \overset{\pi}\to A^*/S
\overset{\psi}\to M,$$ where $S$ is a Church-Rosser system such that
$A^*/ S$ is a finite.

\begin{conjecture}\label{con}
  Let $\phi: A^* \to M$ be a \homo to a finite monoid $M$.  Then
  $\phi$ factorizes through a finite Church-Rosser monoid.
\end{conjecture}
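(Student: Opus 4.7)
The plan is to proceed by induction, using the local divisor technique advertised in the introduction to peel off one letter at a time while simultaneously shrinking $M$. Order data $(M,\phi,A)$ by the lexicographic pair $(|M|,|A|)$. The base cases are immediate: if $|M|=1$, take $S=\set{a\ra{}1}{a\in A}$, which is \swr and trivially confluent; if $A=\es$, take $S=\es$.

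For the inductive step, pick a generator $a\in A$ with $\phi(a)\neq 1$ and set $B=A\sm\oneset{a}$. First apply the induction hypothesis to the restricted morphism $\phi|_{B^*}:B^*\to M$ to obtain a finite Church-Rosser system $S_B\sse B^*\times B^*$ with $B^*/S_B$ finite and $[u]_{S_B}\sse\phi^{-1}(\phi(u))$ for all $u\in B^*$. Every word $w\in A^*$ factors uniquely as $w=w_0aw_1a\cdots aw_k$ with $w_i\in B^*$, and inside each block we may rewrite by $S_B$; this handles all rewriting that does not cross an $a$.

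The substantive work is the interaction between $a$ and the neighbouring $B$-blocks. The idea is to use the local divisor $M_{\phi(a)}$ carried by $\phi(a)M\isect M\phi(a)$ with multiplication $xc\circ cy=xcy$; whenever $\phi(a)$ is not a unit (which can be arranged in the aperiodic case, since some power of $\phi(a)$ is idempotent), $M_{\phi(a)}$ is a proper divisor of $M$. Applied to the morphism sending $a$ together with the finitely many $\IRR(S_B)$-representatives into $M_{\phi(a)}$, the induction hypothesis returns a finite Church-Rosser system that, pulled back, contributes finitely many rules of the form $au_1a\cdots au_ka\ra{}au'a$ with $u_i,u'\in\IRR(S_B)$ and $\phi(au_1\cdots u_ka)=\phi(au'a)$. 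Adding these, together with clean-up rules for the prefix before the first $a$ and the suffix after the last, yields the candidate system $S$. Each added rule strictly reduces the multiset of subwords, so \swr-ness is preserved, and the quotient $A^*/S$ remains finite because $\IRR(S)$ is built from finitely many pieces.

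The main obstacle is confluence. The critical pairs to resolve are the overlaps between $S_B$-rules acting inside some $w_i$ and the new $a$-rules whose left-hand sides straddle $w_i$. Making these diverging rewritings reconverge requires that the representatives modulo $S_B$ be chosen canonically, and that the local-divisor system be lifted back in a way compatible with internal normalisation; this in turn forces a careful inductive formulation, presumably strengthening the statement from ``factorises through a finite Church-Rosser monoid'' to the \swr version with explicit control over $\IRR(S)$. A secondary, more serious obstacle is the passage from the aperiodic case to arbitrary $M$: the Reinhardt-Th\'erien result for groups produces \lr (not \swr) systems, so it cannot serve as the base of the induction scheme above, which is precisely why \refcon{con} remains open in the general case.
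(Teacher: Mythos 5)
You are being asked to prove \refcon{con} itself, which the paper does not prove: it is left open, and the paper only establishes the special case of aperiodic $M$ (\refthm{thm:main}). Your proposal has the same scope, and you in fact concede the decisive gap in your last sentence. The induction on $(\abs M,\abs A)$ via local divisors only makes progress when you can pick a letter $a$ with $\phi(a)$ a \emph{non-unit} of $M$: only then is $\abs{\phi(a)M\isect M\phi(a)}<\abs M$. In an aperiodic monoid the only unit is $1$, so $\phi(a)\neq 1$ suffices; but for a general finite monoid --- say $M$ a nontrivial group with $\phi$ surjective --- every $\phi(a)$ is a unit, the local divisor is all of $M$, and the induction never terminates. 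This is not something that ``can be arranged'': it is exactly why the paper restricts to aperiodic $M$, and why the general conjecture remains open (the group case being the announced result of Reinhardt and Th\'erien, which is \lr but not \swr and hence, as you note, cannot serve as a base for this induction).

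Within the aperiodic case your sketch does follow the paper's route, but the step you defer --- confluence --- is precisely where the paper's construction has its one essential twist, and your description misses it. The paper does not add block-straddling rules of the form $a u_1 a\cdots u_k a\ra{} a u' a$ directly; it first builds a \swr Church-Rosser system $T$ over the \emph{code} $K=\IRR(R)c$ (inner induction on the local divisor $M_c$), and then lifts each rule $(\ell,r)\in T$ to $c\ell\ra{} cr$, prepending an extra $c$. Because every left-hand side of $\wh T$ then begins and ends with $c$ while left-hand sides of $R$ contain no $c$, there are no overlaps between $R$ and $\wh T$ at all, and any overlap inside $\wh T$ is forced to align on code-word boundaries and is resolved by the confluence of $T$ over $K$. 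Without that leading $c$ the critical pairs genuinely need not resolve (the paper's own example: $T$ may contain $abc\ra{}1$ and $bc\ra{}1$ with $a$ not a left-hand side). There are also no ``clean-up rules'' for the outer blocks in the paper's system; $\IRR(S)$ is controlled by $R$ together with the bound on $\IRR(T)$. So even restricted to aperiodic $M$ your proposal has a real hole at the confluence step, and for general $M$ the approach does not get off the ground.
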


\refcon{con} is stronger than \refcon{con:mcnno}. However, we believe
that a positive solution to \refcon{con:mcnno} comes through a proof
of \refcon{con}.  Actually, the result in~\cite{reinhardtT03} also
announces that \refcon{con} is true for finite groups. We are going to
show here that an even stronger statement than \refcon{con} holds for
finite aperiodic monoids.

\begin{example}\label{exa:2}
  Consider again the language $L = (bc)^+$ from \refex{exa:1}. Another
  Church-Rosser system for $L$ is given by 
  \begin{alignat*}{2}
    S = \{ 
      && bbb &\longrightarrow bb,\
      bbc \longrightarrow bb,\ 
      cbb \longrightarrow bb, \\
      && ccc &\longrightarrow bb,\ 
      ccb \longrightarrow bb,\ 
      bcc \longrightarrow bb \\
      && bcb &\longrightarrow b, \
      cbc \longrightarrow c
    \}.
  \end{alignat*}
  As in \refex{exa:1} we have $L = [bc]_S$; but here, the monoid
  $\smallset{b,c}^*/S$ is finite. It has $7$ elements: $[1]_S$,
  $[b]_S$, $[c]_S$, $[bc]_s$, $[cb]_s$, $[bb]_S$, and $[cc]_S$.
  Note that $S$ is not subword-reducing.
  \eex
\end{example}

\section{Local divisors}\label{sec:ld}

The notion of {\em local divisor} dates back to a technical report of
Meyberg where he introduced this concept in commutative algebra,
see~\cite{FeTo02,Mey72}.  In finite semigroup theory and formal
language theory the explicit definition of a local divisor appeared
first in~\cite{dg06IC}. Since then it turned out to be a very useful
tool for simplifying classical proofs like
in~\cite{dg08SIWT:short,DiekertKS11} or in finding new results like in
this paper. The definition of a local divisor extends the definition
of a Sch\"utzenberger group for the $\mathcal H$-class of an arbitrary
element,~\cite{cp67,rs09qtheory}.  A category generalization is being
used by Steinberg and Costa in the context of symbolic dynamics
(unpublished).

In this paper we use local divisors for aperiodic monoids, only.  Let
$M$ be a monoid and let $c \in M$.  We put on the subsemigroup $cM
\cap Mc$ a monoid structure by defining a new multiplication $\circ$
as follows:
$$xc \circ cy = xcy.$$
It is straightforward to see that $\circ$ is well-defined and $(cM
\cap Mc, \circ)$ is a monoid with neutral element $c$. 
  
The following observation is crucial: If the monoid $M$ is finite and
aperiodic, then $\abs{cM \cap Mc} < \abs{M}$ whenever $c \neq 1$. This
is clear, because $1 \in {cM \cap Mc}$ implies that $c$ is a unit of
$M$, but $c \neq 1$ and there are no non-trivial units in aperiodic
monoids.  The set $M' = \set{x}{cx \in Mc}$ is a submonoid of $M$, and
$c{\cdot}: M' \to cM \cap Mc : x \mapsto cx$ is a surjective
homomorphism. In particular, if $M$ is aperiodic, then $(cM \cap Mc,
\circ)$ is aperiodic, too. Since $(cM \cap Mc, \circ)$ is the
homomorphic image of a submonoid it is a divisor of $M$.  We therefore
call $(cM \cap Mc, \circ)$ the {\em local divisor} of $M$ at $c$. Note
that if $c = c^2$ is an idempotent, then $(cM \cap Mc, \circ) =
(cMc,\cdot)$ is the usual {\em local monoid} defined by the
subsemigroup $cMc$ of $M$. Thus, the notion of local divisor
generalizes the notion of local monoid from idempotents to arbitrary
elements.

\section{\refcon{con} holds for aperiodic monoids}\label{sec:con}
 We have the following result.

\begin{theorem}\label{thm:main}
  Let $\phi: A^* \to M$ be a \homo to a finite aperiodic monoid $M$.
  Then $\phi$ factorizes through a finite aperiodic Church-Rosser
  monoid $A^* / S$ where $S$ is subword-reducing.
\end{theorem}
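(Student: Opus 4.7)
The plan is to argue by induction on the pair $(|M|,|A|)$ in lexicographic order, after replacing $M$ by $\phi(A^*)$ so that $\phi$ is surjective. For the base case $|M|=1$, the system $S=\{a\to 1 : a\in A\}$ is finite and subword-reducing, has no nontrivial critical peaks (the left-hand sides are pairwise distinct single letters), and $A^*/S$ is the trivial monoid, so $\phi$ factors through it. In the inductive step, if some $c\in A$ satisfies $\phi(c)=1$, I set $B=A\setminus\{c\}$, apply the induction hypothesis at $(|M|,|A|-1)$ to $\phi|_{B^*}$ to obtain a subword-reducing Church-Rosser system $T$ on $B^*$, and take $S=T\cup\{c\to 1\}$. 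The new rule does not overlap any $T$-rule (their left-hand sides lie in $B^*$), and $A^*/S$ is isomorphic to $B^*/T$ via the $c$-deletion map.

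Otherwise every letter is mapped to a non-identity element of $M$. I pick any $c\in A$, let $B=A\setminus\{c\}$, and apply the induction hypothesis to $\phi|_{B^*}$ to obtain $T$ as above. Let $K=\IRR(T)$, a finite subset of $B^*$, and introduce a fresh alphabet $C=\{c_k : k\in K\}$. I define $\pi_c\colon C^*\to (M_c,\circ)$ by $\pi_c(c_{k_1}\cdots c_{k_n})=\phi(ck_1ck_2c\cdots ck_nc)$; the local divisor identity $xc\circ cy=xcy$ makes this a monoid homomorphism. Since $\phi(c)\neq 1$ and $M$ is aperiodic, $|M_c|<|M|$, so the induction hypothesis at $(|M_c|,|C|)$ yields a subword-reducing Church-Rosser system $S_c$ on $C^*$ with $C^*/S_c$ finite aperiodic through which $\pi_c$ factors. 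Each $S_c$-rule $(c_{k_1}\cdots c_{k_m},\, c_{k_1'}\cdots c_{k_{m'}'})$ lifts to the $A^*$-rule $(ck_1c\cdots ck_mc,\, ck_1'c\cdots ck_{m'}'c)$; the lifting preserves the subword property because choosing a subsequence of $C$-letters selects the corresponding $B$-blocks together with enough separating copies of $c$. I set $S=T\cup\{\text{lifted }S_c\text{-rules}\}$.

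Confirming that $S$ is Church-Rosser reduces by Newman's lemma to checking local confluence, and I would proceed by case analysis on critical pairs. Two $T$-rules fit entirely inside a common maximal $B$-block of the ambient word (since $T$-redexes are in $B^*$) and are joined by confluence of $T$. A $T$-rule cannot overlap a lifted $S_c$-rule because the latter has all of its $B$-blocks in $K=\IRR(T)$, ruling out any $T$-redex there. Two lifted rules at overlapping $A^*$-positions correspond in the middle $C^*$-word to two overlapping $S_c$-redexes and are joined by confluence of $S_c$. The $S$-irreducible words have the shape $\hat u_0 c v_1 c\cdots c v_{n-1}c\hat u_n$ with $\hat u_0,\hat u_n\in K$ and $c_{v_1}\cdots c_{v_{n-1}}\in\IRR(S_c)$, so $A^*/S$ is finite. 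That $\phi$ factors through $S$ is immediate: $T$-rules preserve $\phi|_{B^*}$, and each lifted rule preserves $\phi$ through the identity $\pi_c(c_{k_1}\cdots c_{k_m})=\phi(ck_1c\cdots ck_mc)$.

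The main obstacle I expect is establishing aperiodicity of $A^*/S$. For $w\in A^*\setminus B^*$, write $w=u_0cu_1c\cdots cu_k$ with $u_i\in B^*$, and after one round of $T$-reduction assume each $u_i$ lies in $K$. The $S$-normal form of $w^n$ then has middle $C^*$-word $X(YX)^{n-1}$, where $X=c_{u_1}\cdots c_{u_{k-1}}$ and $Y=c_v$ with $v$ the $T$-normal form of $u_ku_0$. Aperiodicity of $C^*/S_c$, supplied by the induction hypothesis, provides $m$ with $[(YX)^m]_{S_c}=[(YX)^{m+1}]_{S_c}$, and this forces $[w^n]_S=[w^{n+1}]_S$ for $n>m$. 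Words in $B^*$ are handled by aperiodicity of $B^*/T$, completing the induction.
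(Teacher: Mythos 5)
Your proof is correct and follows essentially the same route as the paper: the same lexicographic induction on $(\abs{M},\abs{A})$, the same passage to the local divisor $M_c$ over an alphabet of irreducible $B$-blocks, the same lifting of the inner system with a leading (and trailing) $c$ to secure both confluence and the subword property, and the same aperiodicity argument via the conjugated middle word. The only cosmetic differences are your use of a fresh abstract alphabet $C$ in place of the code $\IRR(R)c\subseteq A^*$ and the harmless (and strictly unnecessary) extra case for letters mapping to $1$.
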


The rest of this section is devoted to the proof of \refthm{thm:main}. 
The proof is by induction on the parameter $(\abs{M},\abs{A})$ with
lexicographic order.  The result is true if $\varphi(A^*)$ is
trivial. Note that this covers $M = \smallset{1}$ as well as $A =
\emptyset$.  In the remaining case there is a letter $c \in A$ such
that $\phi(c) \neq 1$. We let $B= A\sm \smallset{c}$, and for better
reading we identify $c$ and $\phi(c)\in M$.  Since $c \neq 1 \in M$
and $M$ is aperiodic, $c$ is not a unit.  Hence $M_c= cM \cap Mc$ has
less elements than $M$.
  
Since $\abs B < \abs A$ we find, by induction, a finite \swr
Church-Rosser system $R \sse B^* \times B^*$ such that the restriction
$\phi|_{B^*}: B^*\to M$ factorizes through a finite Church-Rosser
monoid $B^*/ R$. In particular, $(\ell, r ) \in R$ implies $\phi(\ell)
= \phi(r)$.

For $u \in B^*$ let $\wh {u}$ denote the unique word such that $\wh
{u}\in \IRR(R)$ and $u\RAS*R \wh {u}$. The subset $K = \IRR(R)c \sse
A^*$ is a finite code.  This means that $K^*$ is freely generated, as
a submonoid of $A^*$, by the finite set $K$. Note that $K^+ \sse
A^*c$.  Consider the \homo $\psi: K^*\to (M_c, \circ)$ which is given
by $\psi(\wh u c) = c\phi(u)c$. We have $c\phi(u)c = \phi(c \wh u c)$.
In particular, $\psi$ is well-defined. By induction $\psi: K^*\to
(M_c, \circ)$ factorizes through a finite aperiodic Church-Rosser
monoid $K^*/ T$, where $T\sse K^* \times K^*$ is a finite \swr
Church-Rosser system.
  
Consider a rule $(\ell, r) \in T$. It has the form
$$ \wh{u_1}c  \cdots \wh{u_m}c \ra {}\wh{v_1}c \cdots \wh{v_n}c$$
where the $\wh{u_i}c$ and $\wh{v_j}c$ are letters in $K$, every
right-hand side $\wh{v_1}c \cdots \wh{v_n}c\in K^*$ is a proper
subword of $\wh{u_1}c \cdots \wh{u_m}c\in K^+$.  Since
$K^* \sse A^*$ we can read $T$ as a semi-Thue system over $A$ as
well. Next, we define a new system $\wh{T} \sse A^* \times A^*$ as
follows:
$$\wh{T} = \set{c\ell \ra{} c{r}}{(\ell, r) \in T}.$$
   
We collect some important properties of $\wh{T}$ in a remark: 

\begin{remark}\label{rem:wht}
  The semi-Thue system $\wh{T}\sse A^* \times A^*$ 
  satisfies the following assertions.
  \begin{enumerate}
  \item $\wh{T}$ is \swr, because $T$ has this property. This is
    crucial.  Knowing only that $T$ is \lr as a system over $K^*$
    would not be enough to conclude that $\wh{T}$ is \lr as a system
    over $A^*$.
  \item $\wh{T}$ is confluent. For this it is crucial that we added a
    letter $c$ on the left. This allows to read the words $\wh u c$ as
    letters in $K$ and the confluence of $T$ transfers to the
    confluence of $\wh{T}$. If there was no $c$ on the left, then $T$
    could contain rules $abc \ra{} 1$ and $bc \ra{} 1$, but $a$ is no
    left-hand side in $T$. Over $K$ the words $abc$ and $bc$ are
    letters, hence there is no overlap in $K^*$.
  \item $c\ell \ra{} c{r} \in \wh {T}$ implies $\phi(c\ell) =
    \phi(cr)$.  This is a straightforward calculation in local
    divisors: Let $c\ell = c u_1 c \cdots u_m c$ and $cr = c v_1 c
    \cdots v_n c$ with $u_i, v_i \in \IRR(R)$. By induction, we have
    $\psi(\ell) = \psi(r)$ and thus
    \begin{align*}
      \varphi(c \ell) 
      &= \varphi(cu_1c) \circ \cdots \circ \varphi(cu_mc) \\
      &= \psi(u_1c) \circ \cdots \circ \psi(u_mc) \\
      &= \psi(u_1c \cdots u_m c) 
      = \psi(v_1c \cdots v_n c) \\
      &= \psi(v_1c) \circ \cdots \circ \psi(v_n c) \\
      &= \varphi(cv_1c) \circ \cdots \circ \varphi(cv_nc) = \varphi(c r).
    \end{align*}
  \end{enumerate}
\end{remark}

The proof of \refthm{thm:main} is now a direct consequence of the
following lemma which shows that the system $S = R \cup \wh T$ has the
desired properties.

\begin{lemma}\label{lem:main}
  The semi-Thue system $S = R \cup \wh T$ over $A$ satisfies the
  following assertions.
  \begin{enumerate}
  \item\label{i} $S$ is \swr.
  \item\label{ii} $S$ is confluent.
  \item\label{iii} $\ell \ra{} {r} \in S $ implies $\phi(\ell) =
    \phi(r)$.
  \item\label{iv} $A^*/S$ is a finite aperiodic monoid.
  \end{enumerate}
\end{lemma}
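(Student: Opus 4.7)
The plan is to handle the four claims in order, relying on the inductive hypotheses (which guarantee that $R$ and $T$ are already subword-reducing, confluent, and induce finite aperiodic quotients) and on the three properties of $\wh{T}$ collected in \refrem{rem:wht}. Claim~(i) is immediate since the union of subword-reducing systems is subword-reducing: $R$ is such by induction and $\wh{T}$ by \refrem{rem:wht}(1). Claim~(iii) is equally quick: $R$-rules preserve $\phi$ by the induction hypothesis for $R$, and $\wh{T}$-rules do so by \refrem{rem:wht}(3).

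For claim~(ii), since $S$ is subword-reducing and therefore terminating, Newman's lemma reduces confluence to local confluence. I plan to split the critical pairs into three kinds. The $R$-$R$ case is covered by the inductive confluence of $R$. For $R$-$\wh{T}$ critical pairs I would observe that every $R$-left-hand side lies in $B^*$ whereas every $\wh{T}$-left-hand side has the shape $c\,\wh{u_1} c\cdots\wh{u_m} c$ with $\wh{u_i} \in \IRR(R)$, so any overlap would have to embed an $R$-left-hand side inside some $\wh{u_i}$---impossible; hence no such pairs exist. The substantive case is $\wh{T}$-$\wh{T}$, where the tool is that $K = \IRR(R)\,c$ is a code whose letters all end with $c$: any overlap of two $\wh{T}$-left-hand sides $c\ell_1$ and $c\ell_2$ in $A^*$ realigns with $K$-letter boundaries and translates into an overlap of $\ell_1, \ell_2$ in $K^*$, which converges by confluence of $T$; the corresponding $T$-derivation is then lifted to $A^*$ via $\wh{T}$ because the $c$ prepended in each $\wh{T}$-rule coincides with the trailing $c$ of the preceding $K$-letter.

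For claim~(iv), finiteness is a counting argument. A word is $S$-irreducible exactly when each of its maximal $B^*$-blocks $\alpha_0,\ldots,\alpha_k$ lies in $\IRR(R)$ and no $\wh{T}$-left-hand side is a factor; since $\wh{T}$-left-hand sides correspond precisely to $T$-left-hand sides appearing at $K$-position $\geq 2$ in the $K^*$-parse $(\alpha_0 c)(\alpha_1 c)\cdots(\alpha_{k-1} c)$, every such word lies in $\IRR(R) \cup K\cdot \IRR(T)\cdot \IRR(R)$, which is finite by induction. For aperiodicity I would fix $u \in \IRR(S)$ and exhibit $n$ with $u^n \equiv_S u^{n+1}$. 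The case $u \in B^*$ follows from aperiodicity of $B^*/R$. Otherwise $u = \alpha_0 c\alpha_1 c\cdots c\alpha_k$ with $k \geq 1$; reducing each interior junction $\alpha_k\alpha_0\RAS*R\beta$ via $R$-rules turns $u^n$ into $v(v')^{n-1}\alpha_k$ for explicit $v, v' \in K^+$. Aperiodicity of $K^*/T$ supplies $n$ with $(v')^{n-1}\equiv_T (v')^n$; prepending the common left factor $v$ gives a $T$-derivation of $v(v')^{n-1}\equiv_T v(v')^n$ whose rewrites all sit at $K$-positions greater than $|v|_K \geq 1$, hence are preceded by $c$ in $A^*$, so the derivation transports to $A^*$ via $\wh{T}$-rewrites and yields $u^n \equiv_S u^{n+1}$.

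The main obstacle I foresee is the $\wh{T}$-$\wh{T}$ confluence in (ii) together with its twin in the aperiodicity step of (iv): both rest on a clean passage between $K^*$-rewriting and $A^*$-rewriting, and what makes this passage work is the dual role of the symbol $c$---it terminates each $K$-letter on the right and is prepended on the left by every $\wh{T}$-rule, so the two always align at a $K$-letter boundary.
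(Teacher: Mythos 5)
Your proposal is correct and follows essentially the same route as the paper: (i) and (iii) by inheritance from $R$ and $\wh T$, (ii) by the absence of overlaps between $R$- and $\wh T$-rules together with the confluence of $\wh T$ inherited from $T$ via the prepended $c$, and (iv) by bounding $\IRR(S)$ through $\IRR(R)$ and $\IRR(T)$ and by transporting the aperiodicity of $K^*/T$ to a conjugate $K^+$-power preceded by $c$. The only differences are presentational -- you spell out the critical-pair analysis and the $B^*$ base case of aperiodicity, which the paper leaves implicit.
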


\begin{proof}
  Assertion \ref{i} is clear, because $R$ and $\wh T$ are \swr.
  Assertion \ref{ii} is clear, because there is no overlap of
  left-hand sides between rules of $R$ and $\wh T$.  Assertion
  \ref{iii} is clear, because $R$ and $\wh T$ have this property.  It
  remains to show \ref{iv}. By induction $K^*/ T$ is finite. Hence
  there is a maximal value $\mu$ such that every word in $K^*$ of
  length at least $\mu$ is reducible. We conclude that: 
  \begin{equation*}
    \IRR(S) \sse \set{\wh {u_0} c \wh {u_1}\cdots c\wh{u_m}} 
    { \wh {u_i} \in \IRR(R) \wedge 0 \leq m \leq \mu}.
  \end{equation*}
  Since $\IRR(R)$ is finite, we see that $\IRR(S)$ is a subset of a
  finite set, and thus the finiteness of $\IRR(S)$ and of $A^*/S$
  follow. This leaves us to show that $A^*/S$ is aperiodic.  We have
  to show that there exists some $n\in \N$ such that for all $u = \wh
  {u_0}c \wh {u_1}\cdots c \wh{u_m} \in \IRR(S)$ we have $u^{n+1}
  \DAS*S u^{n}$. Let $v = \wh {u_1}c\cdots \wh {u_m\, u_0}c$. Then
  $u^{n+1} \DAS*R p c v^n q$ and $u^n \DAS*R p c v^{n-1} q$ for some
  $p,q \in A^*$.  Therefore, it is enough to show that $cv^{n} \DAS*S
  cv^{n-1}$ whenever $n$ is large enough.  The $\wh {u_i}c$'s are code
  words of $K$, hence letters in the alphabet $K$ and we can read $v
  \in K^*$. Here we can use induction, and we know $v^{n} \DAS*T
  v^{n-1}$ if $n$ is large enough, because $K^*/ T$ is aperiodic.
  This implies $cv^{n} \DAS*{\wh{T}} cv^{n-1}$ and hence the result.
\end{proof}

This completes the proof of \refthm{thm:main}.

\begin{example}
  Consider again the language $L = (bc)^+$ from \refex{exa:1} and
  \refex{exa:2}. Its syntactic monoid is $M = \oneset{1,b,c,bc,cb,0}$
  with $bb = cc = 0$, $bcb = b$, $cbc = c$, $1$ is neutral, and $0$ is
  a zero element. In particular, $bc$ and $cb$ are idempotent. Here,
  the syntactic homomorphism $\varphi_L : \oneset{b,c}^* \to M$ is
  induced by $b \mapsto b$ and $c \mapsto c$.  We apply the above
  algorithm for obtaining a \CR monoid factorizing $\varphi_L$.

  First we choose to localize at $c$. Then $N = \oneset{1,b,0}$ is the
  submonoid generated by $b$. The restriction of $\varphi_L$ to $b^*$
  factorizes through the \CR monoid defined by the system 
  \begin{equation*}
    R = \oneset{bbb \longrightarrow bb}.
  \end{equation*}
  This leads to the irreducible elements $\IRR(R) =
  \oneset{1,b,bb}$. Now, the homomorphism $\psi : \oneset{c,bc,bbc}^*
  \to M_c$ is defined by $x \mapsto cx$ for $x \in
  \oneset{c,bc,bbc}$. Note that we consider $\oneset{c,bc,bbc}$ as a
  three-letter alphabet. In particular, $M_c = \oneset{c,0}$ and $c
  \mapsto 0$, $bbc \mapsto 0$, and $bc \mapsto c$.

  For $\psi$ we obtain the rules 
  \begin{alignat*}{2}
    T = \{\
    && (c)(c) &\longrightarrow (c), \\
    && (bc) &\longrightarrow 1,\\
    && (bbc)(bbc) &\longrightarrow (bbc), \\ 
    && (c)(bbc)(c) &\longrightarrow (c)
    \qquad \}
  \end{alignat*}
  The parenthesis are for identifying letters of the alphabet of
  $\psi$. This leads to the system
  \begin{alignat*}{2}
    \widehat{T} = \{\
    && ccc &\longrightarrow cc, \\
    && cbc &\longrightarrow c,\\
    && cbbcbbc &\longrightarrow cbbc, \\
    && ccbbcc &\longrightarrow cc
    \qquad \}
  \end{alignat*}
  and $S = R \cup \widehat{T}$ is the system for $\varphi$. In
  $\IRR(S)$ there are $65$ irreducible elements and $bbcbbccbbcbb$ is
  the longest one.
  \eex
\end{example}

\section{Algebraic constructions}\label{sec:alg}

The aim of this section is to place the explicit constructions from the 
previous \refsec{sec:con} into a broader algebraic context.
It shows that the quotient monoid $A^*/S$ in \reflem{lem:main}
has an algebraic interpretation.

\subsection{Rees-extension monoids and Church-Rosser systems}

Let $\rho: P\to Q$ be a mapping between two monoids $P$ and
$Q$. 
We are going to define the \emph{Rees-extension monoid} of $\rho$
which we shall denote by $E(\rho)$.
If $\rho$ is chosen properly, then $E(\rho)$ coincides with the monoid
$A^*/S$ where $S \sse A^* \times A^*$ is the \swr confluent semi-Thue
system of \reflem{lem:main}, see \refprop{WasSollDieseSection}.
As a carrier set for the monoid $E(\rho)$ we choose the disjoint union
$P \dotcup (P\times Q \times P)$. The multiplication is as follows:
\begin{alignat*}{2}
  u \cdot v &= uv   
  &\quad&\text{for $u,v\in P$.}   \\
  x \cdot (u,q, v)\cdot y &=(xu,q, vy)  
  &\quad&\text{for $x,u,v,y\in P$ and  $q \in Q$.}\\
  (u,q, v)\cdot(x,r, y)&=(u,q\,\rho(vx)\,r,y )  
  &\quad&\text{for $u,v,x,y\in P$ and  $q,r \in P$.}
\end{alignat*}
Now, $P$ is a submonoid of $E(\rho)$ and $P \times Q \times P$ is an
ideal.  As a semigroup, $P \times Q \times P$ is a special case of the
Rees-matrix construction, see e.g.~\cite{cp67,rs09qtheory}: The
mapping~$\rho$ defines a $P \times P$ matrix $\mathcal{R}$ with
coefficients in $Q$ by $\mathcal{R}(v,x) = \rho(vx)$; and the
multiplication in $P \times Q \times P$ can be written as $(u,p,
v)\cdot(x,q, y) = (u,p\,\mathcal{R}(v,x)\,q,y)$.

In the following we let $c = \rho(1)\in Q$.  Multiplying triples
$(1,q,1)$ and $(1,r,1)$ yields $(1,q,1)\cdot (1,r,1) = (1,q\,\rho(1)\,
r,1)= (1,qcr,1)$. In particular, the sandwich construction $(Q,\#_c)$
appears as a subsemigroup, where $\#_c$ denotes the standard
sandwich-multiplication defined by $q\mathop{\#_c}r = qcr$.  We have
$(1,1,1)^n = (1,c^{n-1},1)$ and, more general, $(u,q,v)^n= \big( u,
\big(q\, \rho(vu)\big)^{n-1}q, v\big)$ for all $n \geq 1$.  It follows
that $E(\rho)= P \dotcup (P\times Q \times P)$ is aperiodic if both
$P$ and $Q$ are aperiodic.

For \refprop{WasSollDieseSection} below, we apply the
Rees-extension monoid to the setting in \refsec{sec:con}. We start
with a \homo $\phi: A^* \to M$ to a finite aperiodic monoid $M$, the
alphabet $A$ is the disjoint union of $B$ and $\smallset{c}$, $P$ is
the quotient $B^* / R$, $Q$ is the quotient $K^* / T$ for $K = \IRR(R)
c$. Since we can identify $P = B^* / R$ and $\IRR(R)$, we define $\rho
: P \to Q$ by $\rho(u) = [uc]_T$ for $u \in \IRR(R)$. Now, $A^* / S$
from \refsec{sec:con} and $E(\rho)$ coincide:

\begin{proposition}\label{WasSollDieseSection}
  In the situation above, $A^* / S$ and $E(\rho)$ are
  isomorphic.
\end{proposition}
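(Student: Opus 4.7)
The plan is to construct an explicit isomorphism $\overline\Phi \colon A^*/S \to E(\rho)$, induced by a map $\Phi \colon A^* \to E(\rho)$ that reads a word through its unique factorization by the occurrences of $c$. Every $w \in A^*$ factors uniquely as $w = u_0 c u_1 c \cdots c u_m$ with $u_i \in B^*$ and $m$ equal to the number of $c$'s in $w$. For $m = 0$ I set $\Phi(w) = [u_0]_R \in P$, and for $m \geq 1$ I set $\Phi(w) = ([u_0]_R,\, [\widehat{u_1}c\,\widehat{u_2}c\,\cdots\,\widehat{u_{m-1}}c]_T,\, [u_m]_R) \in P \times Q \times P$, which makes sense because every $\widehat{u_i}c$ lies in $K$.

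The main calculation is that $\Phi$ is a homomorphism. When both $w_1$ and $w_2$ contain at least one $c$, concatenation merges the last $B^*$-block $u_m$ of $w_1$ with the first $B^*$-block $v_0$ of $w_2$ into a single new block $u_m v_0$, whose contribution to the middle $K^*$-word is precisely $\widehat{u_m v_0}c$. This is exactly what the multiplication of $E(\rho)$ records through the term $\rho(vx) = \rho([u_m]_R[v_0]_R) = [\widehat{u_m v_0}c]_T$, so the two middle $T$-classes match. The boundary cases where $w_1$ or $w_2$ lies in $B^*$ reduce to the outer $P$-action $x\cdot(u,q,v)\cdot y = (xu,q,vy)$ and are routine. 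Next I would check that each rule of $S$ is respected: a rule from $R$ alters some single $u_i$ within its $R$-class, hence leaves $\Phi(w)$ unchanged, while a rule $c\ell \to cr$ in $\widehat T$ alters the middle $K^*$-word by a $T$-rule and so preserves its $T$-class. Thus $\Phi$ descends to a well-defined homomorphism $\overline\Phi \colon A^*/S \to E(\rho)$.

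To see that $\overline\Phi$ is bijective, I would exhibit a two-sided inverse $\Psi \colon E(\rho) \to A^*/S$ using the normal form analyzed in \reflem{lem:main}. Set $\Psi(p) = [\widehat p]_S$ for $p \in P$, and $\Psi(p,q,p') = [\widehat p \cdot c\widehat q \cdot \widehat{p'}]_S$, where $\widehat p, \widehat{p'} \in \IRR(R)$ and $\widehat q \in K^*$ is the $T$-irreducible representative of $q$. The identity $\overline\Phi \circ \Psi = \mathrm{id}$ follows directly from the definitions. For $\Psi \circ \overline\Phi = \mathrm{id}$, given a word $w$, I apply $R$-rules inside each $B^*$-block and then $\widehat T$-rules to the central part to reach the irreducible form exhibited in the proof of \reflem{lem:main}; the resulting word is exactly $\Psi(\overline\Phi([w]_S))$.

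The main obstacle is the homomorphism check at the $\rho$-factor: one has to verify that the combinatorial merging of adjacent $B^*$-blocks under concatenation is encoded precisely by $\rho([u_m]_R [v_0]_R) \in Q$, with no phantom $c$'s appearing or disappearing. A secondary subtlety is the careful bookkeeping of boundary cases (empty $u_0$ or $u_m$, and the empty middle when $m = 1$, so that $\Phi(w)$ lies in $P \times \{1_Q\} \times P$), but these are routine once the generic case is in place.
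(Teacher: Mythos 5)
Your proposal is correct and constructs exactly the isomorphism used in the paper --- the map sending $u_0 c u_1 \cdots c u_m$ to $\big(u_0,\rho(u_1)\cdots\rho(u_{m-1}),u_m\big)$, with the same key computation that the Rees term $\rho(vx)$ in the product $(u,q,v)\cdot(x,r,y)$ accounts precisely for the merged block $\widehat{u_m v_0}\,c$. The only difference is organizational: you define the map on all of $A^*$, check multiplicativity there, and check invariance under the rules of $R$ and $\widehat T$ before exhibiting an inverse on normal forms, whereas the paper defines $\sigma$ only on $\IRR(S)$, proves bijectivity directly, and then verifies multiplicativity by analyzing how products of irreducible words reduce; the two verifications carry the same content.
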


\begin{proof}
  Let $\sigma : \IRR(S) \to E(\rho)$ be defined by 
  \begin{align*}
    \sigma(u_0) &= u_0 \qquad \text{and} \\
    \sigma(u_0 c u_1 \cdots c u_{k+1}) &=
    \big(u_0,\rho(u_1) \cdots \rho(u_{k}), u_{k+1}\big)
  \end{align*}
  for $k \geq 0$ and $u_i \in B^* \cap \IRR(R)$. Here, we indentify
  $P$ with $\IRR(R)$, and $Q$ with $\IRR(T)$.  In particular, by
  definition of $P$ and $Q$, the mapping $\sigma$ is
  surjective. Suppose $\sigma(u_0 c u_1 \cdots c u_{k+1}) = \sigma(v_0
  c v_1 \cdots c v_{\ell+1})$ for $k,\ell \geq 0$ and $u_i, v_i \in
  B^* \cap \IRR(R)$. Then $u_0 = v_0$ and $u_{k+1} = v_{\ell +
    1}$. Moreover, $c u_1 c \cdots u_k c \in \IRR(S)$ and thus $(u_1c)
  \cdots (u_k c) \in K^* \cap \IRR(T)$.  Similarly, $(v_1c) \cdots
  (v_\ell c) \in K^* \cap \IRR(T)$.  Now, $\rho(u_1) \cdots \rho(u_k)
  = \rho(v_1) \cdots \rho(v_\ell)$ implies $(u_1c) \cdots (u_k c) =
  (v_1 c) \cdots (v_\ell c)$ in $K^*$ and thus $cu_1c \cdots u_k c =
  cv_1c \cdots v_\ell c$ in $A^*$.  This shows $u_0 c u_1 \cdots c
  u_{k+1} = v_0 c v_1 \cdots c v_{\ell + 1}$. We conclude that
  $\sigma$ is injective.

  It remains to show that $\sigma$ is a homomorphism. Let $u, v \in
  \IRR(S)$ and $uv \RAS*S w \in \IRR(S)$, i.e., $[u]_S [v]_S =
  [w]_S$. If $u,v \in B^*$, then $\sigma(u)\sigma(v) = w = \sigma(w)$.
  Let now $uv,w \in A^* c A^*$ and $w = w_0 c w_1 \cdots c w_{m+1}$.
  If $u \in B^*$, $v = v_0 c v_1 \cdots c v_{\ell+1}$ and $uv_0 \RAS*S
  x \in \IRR(S)$, then $w_0 = x$ and $c w_1 \cdots c w_{m+1} = c v_1
  \cdots c v_{\ell + 1}$. It follows $\sigma(u)\sigma(v) =
  (x,\rho(w_1) \cdots \rho(w_m),w_{m+1}) = \sigma(w)$. The case $v \in
  B^*$ is symmetric.

  Let now $u = u_0 c u_1 \cdots c u_{k+1} \in A^* c A^*$ and $v = v_0
  c v_1 \cdots c v_{\ell+1} \in A^* c A^*$ with $u_i, v_i \in B^*$.
  Let $u_{k+1} v_0 \RAS*S x \in \IRR(S)$. Then $u_{k+1} v_0 \RAS*R x$.
  We have $c u_1 \cdots c u_{k+1} v_0 c \cdots v_k c \RAS*S c w_1 c
  \cdots w_m c$. By construction of $S$ we see that 
  \begin{equation*}
    c u_1 \cdots c
    u_k c x c v_1 c \cdots v_\ell c \RAS*{\wh T} c w_1 c \cdots w_m c,
  \end{equation*}
  and hence
  \begin{equation*}
    (u_1 c) \cdots (u_{k} c) (x c) (v_1 c) \cdots (v_\ell c)
    \RAS*T (w_1 c) \cdots (w_m c),
  \end{equation*}
  i.e., $\rho(u_1) \cdots \rho(u_k) \rho(x) \rho(v_1) \cdots
  \rho(v_\ell) = \rho(w_1) \cdots \rho(w_m)$ in $Q$.  We conclude that
  $\sigma(u) \sigma(v) = \sigma(w)$.
\end{proof}

\subsection{Rees-extension monoids and local divisors}

Let $\rho : P \to Q$ be arbitrary again.  Observe that $c\neq 1
\in Q$, in general. In the remainder of this section, we draw a
connection between local divisors and the Rees-extension monoid.  We
define an alphabet $\Gam$ by the disjoint union $\Gam = (P \setminus
\smallset{1}) \dotcup \smallset{c}$.  The mapping $\rho$ induces a
homomorphism $\tau : \Gam^* \to E(\rho)$ by defining $\tau(x) = x$
for $x \in P \setminus \smallset{1}$ and $\tau(c) = (1,1,1)$. By
considering $(P\setminus \smallset{1})^* c$ as an infinite alphabet,
$\rho$ also induces a homomorphism $\sigma : \big((P\setminus
\smallset{1})^* c\big)^* \to Q$ by $\sigma(uc) = \rho(\eps(u))$ for $u
\in (P\setminus \smallset{1})^*$. Here, $\eps : (P\setminus
\smallset{1})^* \to P$ is the evaluation homomorphism.

Consider a \homo $\gam: \Gam^* \to M$ with $\gam(c) = c \in
M$.  The aim is to find a condition such that $\gam$ factorizes
through $\tau: \Gam^* \to E(\rho)$. This means we wish to write $\gam
= \tau \psi$ for some suitable \homo $\psi:E(\rho) \to M$. The
condition we are looking for is statement \ref{lifti} of
\refprop{prop:lift}.

\begin{proposition}\label{prop:lift}
  Let $\gam: \Gam^* \to M$ be a \homo with $\gam(c) = c \in M$. If $Q$
  is generated by $\rho(P)$, then the following assertions are
  equivalent.
  \begin{enumerate}
  \item\label{lifti} For $w, w'\in \big((P\setminus \smallset{1})^*
    c\big)^*$ the equality $\sigma(w) = \sigma(w')\in Q$ implies
    $c\gam(w) = c\gam( w')\in cM \cap Mc$.
  \item\label{liftii} There exists a \homo $\psi_c: Q \to M_c$ with
    $M_c = (cM \cap Mc, \circ)$ such that the following diagram commutes.
    \begin{center}
      \begin{tikzpicture}
        \draw (0,0) node (Pc) {$\big((P\setminus \smallset{1})^* c\big)^*$};
        \draw (3,0) node (Q) {$Q$};
        \draw (0,-2) node (M) {$Mc \cup \smallset{1}$};
        \draw (3,-2) node (Mc) {$M_c$};
        \draw[->] (Pc) -- node[above] {$\sigma$} (Q);
        \draw[->] (Pc) -- node[left] {$\gam$} (M);
        \draw[->] (Q) -- node[right] {$\psi_c$} (Mc);
        \draw[->] (M) -- node[above] {$c {\cdot}$} node[below] {$x \mapsto cx$} (Mc);
      \end{tikzpicture}
    \end{center}
  \item\label{liftiii} There exists a \homo $\psi: E(\rho) \to M$ such
    that the following diagram commutes.
    \begin{center}
      \begin{tikzpicture}
        \draw (0,0) node (C) {$C^*$};
        \draw (3,0) node (E) {$E(\rho)$};
        \draw (0,-2) node (M) {$M$};
        \draw[->] (C) -- node[above] {$\tau$} (E);
        \draw[->] (C) -- node[left] {$\gam$} (M);
        \draw[->] (E) -- node[below right,inner sep=0.5mm] {$\psi$} (M);
      \end{tikzpicture}
    \end{center}
  \end{enumerate}
\end{proposition}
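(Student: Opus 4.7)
The plan is to prove the equivalence via the cycle $(\ref{liftiii}) \Rightarrow (\ref{lifti}) \Rightarrow (\ref{liftii}) \Rightarrow (\ref{liftiii})$. Two preparatory observations drive the first two implications. First, a direct induction on the number of $c$'s in $w$ gives the identity $\tau(cw) = (1, \sigma(w), 1)$ for every $w \in ((P \setminus \{1\})^*c)^*$: whenever $w$ is extended on the right by a letter $u_i c$, the multiplication by $\eps(u_i) \in P$ only changes the last component of the triple, and the subsequent multiplication by $\tau(c) = (1,1,1)$ resets that last component to $1$ while appending $\rho(\eps(u_i))$ to the middle. Second, the hypothesis that $\rho(P)$ generates $Q$ makes $\sigma$ surjective. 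Together these give $(\ref{liftiii}) \Rightarrow (\ref{lifti})$ at once: if $\sigma(w) = \sigma(w')$ then $\tau(cw) = \tau(cw')$ in $E(\rho)$, and applying $\psi$ yields $c\gamma(w) = \gamma(cw) = \gamma(cw') = c\gamma(w')$.

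For $(\ref{lifti}) \Rightarrow (\ref{liftii})$ I would define $\psi_c: Q \to M_c$ by setting $\psi_c(\sigma(w)) \coloneq c\gamma(w)$; surjectivity of $\sigma$ supplies the domain, and $(\ref{lifti})$ is precisely the well-definedness. Commutativity of the diagram then holds by construction. To verify that $\psi_c$ is a homomorphism $(Q, \cdot) \to (M_c, \circ)$, I would write $c\gamma(w) \in Mc$ as $yc$ (extracting $y$ from the prefix of $w$) and unfold the sandwich product: $yc \circ c\gamma(w') = yc\gamma(w') = c\gamma(w)\gamma(w') = c\gamma(ww')$, which is exactly $\psi_c(\sigma(w)\sigma(w'))$.

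For the main implication $(\ref{liftii}) \Rightarrow (\ref{liftiii})$ I would build $\psi$ in two pieces. On the ideal $P \times Q \times P$ set $\psi(u,q,v) \coloneq \psi_P(u)\cdot\psi_c(q)\cdot\psi_P(v)$, where $\psi_P: P \to M$ is a homomorphism agreeing with $\gamma$ on the letters $P \setminus \{1\}$, and on the submonoid $P$ put $\psi|_P = \psi_P$. Checking that $\psi$ is a monoid homomorphism reduces, using the product rule $(u,q,v)(x,r,y) = (u, q\rho(vx)r, y)$ in $E(\rho)$, to the identity $\psi_c(q\rho(vx)r) = \psi_c(q)\cdot\psi_P(vx)\cdot\psi_c(r)$, which is a routine expansion of $\circ$ once one knows $\psi_c(\rho(p)) = c\,\psi_P(p)\,c$ (the content of the $(\ref{liftii})$-diagram). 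The equality $\psi\tau = \gamma$ is then immediate on the two kinds of letters of $C$. The hardest step will be extracting the required $\psi_P$: condition $(\ref{liftii})$ a priori pins down only $c\gamma(u)c$ as a function of $\eps(u)$, so I would need to leverage the generation hypothesis on $Q$ together with the homomorphism structure of $\psi_c$ to upgrade this sandwich data to a well-defined homomorphism $P \to M$ through which $\gamma$ factors on $(P\setminus\{1\})^*$.
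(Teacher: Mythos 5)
Your cycle $(\ref{liftiii}) \Rightarrow (\ref{lifti}) \Rightarrow (\ref{lifti}i) \Rightarrow (\ref{liftiii})$ is the same cycle the paper runs (entered at a different point), and your first two implications are correct and essentially identical to the paper's: the identity $\tau(cw)=(1,\sig(w),1)$, the definition $\psi_c(\sig(w))=c\gam(w)$ with well-definedness supplied by (\ref{lifti}) and surjectivity of $\sig$ by the generation hypothesis, and the sandwich computation $yc\circ c\gam(w')=c\gam(ww')$ all appear there in the same form.

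The gap is exactly where you locate it, in $(\ref{liftii})\Rightarrow(\ref{liftiii})$, and it cannot be closed along the route you sketch. Condition (\ref{liftii}) only determines the sandwich data $c\gam(u)c=\psi_c(\rho(\eps(u)))$, and no use of the generation of $Q$ by $\rho(P)$ recovers $\gam(u)$ from it: one can have $\eps(u)=\eps(u')$, hence $c\gam(u)c=c\gam(u')c$, while $\gam(u)\neq\gam(u')$. Concretely, take $M=\smallset{1,x,y,0}$ with all products of elements of $\smallset{x,y,0}$ equal to $0$, let $\gam(c)=x$, let $P=\smallset{1,p}$ with $p^2=p$ and $\gam(p)=y$, and let $Q=\smallset{1,q}$ with $q^2=q$ and $\rho(1)=\rho(p)=q$. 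Then $\sig(w)=q$ and $c\gam(w)=0$ for every nonempty $w$, so (\ref{lifti}) and (\ref{liftii}) hold; but $\tau(pp)=p^2=p=\tau(p)$ in $E(\rho)$ while $\gam(pp)=0\neq y=\gam(p)$, so no $\psi$ with $\psi\tau=\gam$ exists and (\ref{liftiii}) fails. The paper does not construct a $\psi_P$ at all: its proof sets $\psi(u)=\gam(u)=u\in P\sse M$, i.e.\ it works under the standing identification that the letters $x\in P\sm\smallset{1}$ are elements of $M$ with $\gam(x)=x$, so that $\gam$ restricted to $(P\sm\smallset{1})^*$ is the evaluation $\eps$ followed by an inclusion $P\sse M$, and the \homo you call $\psi_P$ is simply that inclusion. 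If you add this hypothesis (left implicit in the statement), your formula $\psi(u,q,v)=u\,\psi_c(q)\,v$ goes through and coincides with the paper's construction; without it, the step you defer as ``hardest'' is not merely hard but impossible.
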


\begin{proof}
  \ref{lifti} $\implies$ \ref{liftii}: We define $\psi_c(\sig(w)) =
  c\gam(w)$. Condition \ref{lifti} says that $\psi_c: Q \to M_c$ is
  well-defined.  It is a \homo because $\gamma$ and the left-shift
  $c{\cdot} : Mc \cup \smallset{1} \to M_c, x \mapsto cx$ are \homos
  and $Q\sm\smallset{1} \sse \sig(\Gam^*c)$.

  \ref{liftii} $\implies$ \ref{liftiii}: For $u \in P \sse E(\rho)$ we
  define $\psi (u) = \gam(u) = u \in P \sse M$.  All other elements in
  $E(\rho)$ have the form $(u, \sig(\alp),v)$ with $u,v \in P$ and
  \mbox{$\alp \in \big((P\setminus \smallset{1})^* c\big)^*$}.  Define
  $\psi(u,\sig(\alp),v) = u\psi_c(\sig(\alp))v$. This in an element in
  $M$ because $M_c \sse M$. Now, $\psi_c(\sig(\alp))= c
  \gam(\alp)$. Hence, $\psi(u,\sig(\alp),v) = \gam(uc\alp v)$. Since
  $\gam$, $\tau$ are \homo{}s and $\tau$ is surjective, $\psi$ is a
  \homo, too.
 
  \ref{liftiii} $\implies$ \ref{lifti}: Consider $w \in (P^*c)^*$.  We
  have $\tau(cw) = (1, \sig(w), 1)$. By \ref{liftiii} we have $\gam(cw) 
  = \psi(1, \sig(w), 1) $ . In particular, $\sig(w) = \sig( w')\in
  Q$ implies $c\gam(w) = c\gam( w')\in M_c$.
\end{proof}


\newcommand{\Ju}{Ju}\newcommand{\Ph}{Ph}\newcommand{\Th}{Th}\newcommand{\Ch}{Ch}\newcommand{\Yu}{Yu}\newcommand{\Zh}{Zh}

\end{document}